\date{}
\newtheorem{example}{\noindent Example}[section]
\newtheorem{theorem}{Theorem}[section]
\newtheorem{algorithm}{Algorithm}[section]
\begin{document}

\title{A robust algorithm and convergence analysis for static replications of nonlinear payoffs}

\vskip 4mm

\author{Jingtang Ma\thanks{School of Economic Mathematics,
Southwestern University of Finance and Economics, Chengdu,
611130, P.R. China (Email: mjt@swufe.edu.cn). The work
was supported by Program for New Century Excellent Talents in University
(Grant No. NCET-12-0922).}, Dongya Deng\thanks{School of Finance,
Southwestern University of Finance and Economics, Chengdu,
611130, P.R. China (Email: 112020204015@2012.swufe.edu.cn)} and Harry Zheng
\thanks{Department of Mathematics, Imperial College, London SW7 2BZ, UK.
(Email: h.zheng@imperial.ac.uk). Corresponding author.}}

\maketitle

\begin{abstract}
In this paper we propose a new robust algorithm to find the optimal  static replicating portfolios for general nonlinear payoff functions and give the estimate of the rate of  convergence that is absent in the literature.
We choose the static replication by minimizing the error bound between the nonlinear payoff function and the linear spline approximation and derive the equidistribution equation for selecting the optimal strike prices. The numerical tests for variance swaps and swaptions and also for the static quadratic replication and the model with  counterparty risk show that the proposed algorithm is simple, fast and accurate.
 The paper  has  generalized and improved the results of the static replication and approximation in  the literature.

\end{abstract}

\bigskip
%\vspace{1.5cm}

\noindent {\bf JEL classification}.
 {C, C6, C63, G, G1, G12}

\bigskip
%\vspace{0.5cm}

\noindent {\bf Keywords}. {Nonlinear payoff,  static replication, equidistribution equation, convergence rate, counterparty risk.}

\section{Introduction}

It is well known that hedging a derivative is in general  much more difficult than pricing the derivative as hedging requires the determination of the feasible trading strategy whereas pricing only involves the computation of the  expected payoff which may be found with the numerical integration   or  simulation. Dynamic replication can be used for hedging with the help of the martingale representation theorem if the market is complete,  however, it is often difficult to implement as the market is in fact incomplete. Static replication is a viable alternative.

The idea of using a portfolio of options to replicate complex payoffs can date back to
\cite{Ross1976} and \cite{Breeden1978}. If options with strikes from zero to infinity are all available, then any payoff function at maturity can be replicated exactly with static hedging. Under the assumption of no arbitrage, the price of a derivative being replicated is then the total premium of the replicating options. Compared to the
dynamic replication which may incur prohibitively high transaction costs, the static replication has many advantages,
 see e.g., \cite{Derman1995}, \cite{Carr1998}, \cite{Demeterfi1999}.
Static replication with a portfolio of European calls and puts is easy to implement and does not incur running transaction costs.
\cite{Carr2013} discuss and compare
the static hedging with the delta hedging  when the underlying asset price is exposed to the possibility of jumps of random sizes  and conclude that the static hedging strongly outperforms the delta hedging.

To find a static replication one needs first to have a good approximation to the payoff function. The linear spline  approximation  is a  simple yet effective method. The key benefit of using a linear spline is that the resulting static replicating portfolio consists of simple European calls, puts, and digital options, and the weights of these options can be easily computed no matter how complex the payoff function is. In theory the approximation error can be made arbitrarily small if the number of grid points is sufficiently large  and the maximal distance of adjacent grid points is sufficiently small. In practice one has to strike the balance between the accuracy and the cost, which means one needs to choose grid points carefully to minimize the error if the number of grid points is fixed.

\cite{Demeterfi1999} use European calls and puts  with equally-spaced strike prices to replicate the log payoff, which is not optimal due to the use of the equally-spaced strike prices. \cite{Broadie2008} propose a simulation method to obtain the optimal approximation of a static replication, which minimizes the approximation error but is  computationally expensive.
\cite{Liu2010} discusses three optimal approximations of nonlinear payoffs. The first two approaches are to minimize the expected area (simple average and weighted average) enclosed by the payoff curve and the chords, which  implicitly assume the payoff function is  convex (or concave) and cannot be applied to general payoffs. The third is to minimize the expected sum of squared differences of the payoff and the replicating portfolio, which
 is computationally expensive in solving the optimization equation for complex nonlinear payoffs. The aforementioned papers do not discuss the convergence theory for the approximation.

In this paper, motivated by the idea from \cite{deBoor1973}, we propose a new robust algorithm to find the optimal approximation for  general nonlinear payoffs and provide the convergence theory for the algorithm. We first give an estimate of the error bound between the nonlinear payoff function and the linear spline approximation. We then choose the strike prices of the static replication by minimizing the error bound  instead of the error itself. The reason of doing this is that we can derive a tractable equidistribution equation for selecting the optimal strike prices, which would be difficult if the error is to be minimized  directly.  This approach of static replication works fine if  the options with strike prices  from zero to infinity are all available.  In practice, we use options with given strike prices traded in the market and we may have no choice on the strike prices. In that case we use the static quadratic hedging to find the optimal weights of the options. The robust algorithm is again useful in computing the optimal weights with some modified payoff functions.

The main contribution of the paper is that a  robust (simple, fast and accurate) iterative algorithm is proposed to find the optimal  static replicating portfolio for general nonlinear payoff functions and the  convergence theory  is  proved. The results of the paper have improved and generalized those of  \cite{Liu2010} and others in the literature.

The paper is organized as follows. In section~2 we discuss the approximation of a nonlinear payoff function by a linear spline and a portfolio of calls and puts  and estimate the error bound (Theorem \ref{thm-error}). In section~3 we propose a robust iterative  algorithm to find the optimal strike prices in static replication  and gives the estimate of the convergence rate (Theorem \ref{thm-convergence-rate}). We also apply the  quadratic hedging to find the optimal weights of the static replicating portfolio when the number of traded calls and puts in the market are fixed and finite.
In section~4 we perform some numerical tests and compare the results with those from analytic formulas or simulations for different payoffs and asset price distributions, including the case of counterparty risk. In section~5 we conclude. In appendix we give the proofs of Theorems \ref{thm-error} and \ref{thm-convergence-rate} and the derivation of the distribution for asset price with counterparty risk.

\section{Static replication  and error bound on approximation}

In this section, we give a  formula as in \cite{Liu2010} for replicating the nonlinear payoff with a basket of European options and derive the error bound on the approximation.

%\subsection{Static replication formula}

Let $S$ be a nonnegative random variable, representing the asset price at maturity, and $f(S)$ the derivative value with  $f$ a continuous payoff function defined on the positive real line.
Let  $[0,+\infty)$ be partitioned by $X_0,\, X_1,\,\ldots,X_{n}$, with $0<X_0<X_1<\cdots<X_n<+\infty$ and $X_0,\, X_n$ being fixed. Then $f$ can be approximated by the following piecewise linear functions:
\begin{equation}\label{piecewise-linear}
 L_{i}(S)=\frac{X_{i+1}-S}{h_{i}}f(X_{i})+\frac{S-X_{i}}{h_{i}}f(X_{i+1}),\quad S\in [X_{i},X_{i+1}],
\end{equation}
where $h_{i}\equiv X_{i+1}-X_{i},\; i=0,1,\ldots,n-1$.
The payoff curve between $X_0$ and $X_n$ can be represented approximately by the following formula (see \cite{Liu2010}):
\begin{eqnarray}\label{replication-formula}
f(S)&\approx& \sum_{i=0}^{n-1}L_{i}(S)\textbf{1}_{X_{i}<S<X_{i+1}}\\
&=& L_{k}(X_{k})-L_{0}(X_0)\textbf{1}_{S<X_0}-L_{n-1}(X_{n})\textbf{1}_{S>X_{n}}
+b_{0}(X_0-S)\textbf{1}_{S<X_0}\nonumber\\
&+&\sum_{i=1}^{k-1}(b_{i}-b_{i-1})(X_{i}-S)\textbf{1}_{S<X_{i}} -b_{k-1}(X_{k}-S)\textbf{1}_{S<X_{k}}+  b_{k}(S-X_{k})\textbf{1}_{S>X_{k}}\nonumber\\ &+&\sum_{i=k+1}^{n-1}(b_{i}-b_{i-1})(S-X_{i})\textbf{1}_{S>X_{i}}-b_{n-1}(S-X_{n})
\textbf{1}_{S>X_{n}},\nonumber
\end{eqnarray}
where $\textbf{1}_{A}$ denotes the indicator function ($\textbf{1}_{A}=1$ for $x\in A$, and $\textbf{1}_{A}=0$ otherwise.) and $b_{i}= (f(X_{i+1})-f(X_{i}))/h_{i}$. In equation (\ref{replication-formula}), the first term of the last equality is a cash amount, the second term a cash-or-nothing put, the third term a cash-or-nothing call, the next three terms are a portfolio of European puts with strike prices $X_{i}$, $i=0,1,\ldots,k$, and the last three terms are a portfolio of European calls with strike prices $X_{i}$, $i=k+1,\ldots,n$. $X_{k}$.  Both puts and calls are likely to be out-of-money options.

When the strike prices $X_0$ and $X_{n}$ are very small and very large, respectively, the probabilities  $P(S<X_{0})$ and $P(S>X_{n})$ are extremely small,
the second, third, fourth and the last terms in the formula (\ref{replication-formula}) have little impact to the valuation and thus may be removed. In such a case, the following formula from \cite{Demeterfi1999} can be used for static replication:
\begin{eqnarray}\label{replication-formula-Deme}
f(S)&\approx& L_{k}(X_{k})+\sum_{i=1}^{k-1}(b_{i}-b_{i-1})(X_{i}-S)^+ -b_{k-1}(X_{k}-S)^+\nonumber\\
&+&  b_{k}(S-X_{k})^+ +\sum_{i=k+1}^{n-1}(b_{i}-b_{i-1})(S-X_{i})^+.
\end{eqnarray}
\cite{Liu2010} chooses the strike prices $X_0,X_1,\ldots,X_{n}$ such that the total area enclosed by the payoff curve and the chords
\[
 \sum_{i=0}^{n-1}\int_{X_{i}}^{X_{i+1}}[L_{i}(S)-f(S)]\,dS
\]
is minimized. \cite{Liu2010} also finds that the performance can be improved if  the weighted total area
\[
\sum_{i=0}^{n-1}\int_{X_{i}}^{X_{i+1}}[L_{i}(S)-f(S)]g(S)\,dS,
\]
is used, where $g$ is the density function of $S$ (conditional on today's price of the underlying).
It is clear that $f$ needs to be convex to ensure all integrands are nonnegative. For a general  payoff function $f$ we measure the error with the weighted  squared norm
%\footnote{Although the expected $L_2$ measure is also used in the least expected square approach by \cite{Liu2010}, his approach is to find the optimal weight of replicating portfolio instead of the optimal positions of strike prices $X_0,X_1,\ldots,X_{n}$.}
 \[
  \hbox{Error}\equiv \sqrt{\sum_{i=0}^{n-1}\int_{X_{i}}^{X_{i+1}}[L_{i}(S)-f(S)]^{2}g(S)\,dS}.
 \]

%\subsection{Error bound on the approximation}

The error bound on the linear spline approximation (\ref{piecewise-linear}) is given by the following theorem.
\begin{theorem}\label{thm-error}
 Assume that $f$ is continuous on $[X_0,X_n]$ and twice continuously differentiable on $(X_{i},X_{i+1})$, $ i=0,1,\ldots, n-1$, with finite second order left and right directional derivatives at $X_i$, $i=0,1,\ldots,n$.   Then
 the  error of the linear spline approximation (\ref{piecewise-linear}) to the nonlinear payoff function $f$  is bounded by
 \[
  \sqrt{\sum_{i=0}^{n-1}\int_{X_{i}}^{X_{i+1}}[L_{i}(S)-f(S)]^{2}g(S)\,dS}\leq \sqrt{2
  \sum_{i=0}^{n-1}h_{i}^{4}\int_{X_{i}}^{X_{i+1}}G(S)(f''(S))^2\,dS},
 \]
 where
 \begin{eqnarray*}
  G(S) &=& \widehat{G}\left(\frac{S-X_{i}}{h_{i}}\right),\\
   \widehat{G}(t) &\equiv& \int_{0}^{t}\widehat{g}_{i}(\xi)\frac{\xi^{2}(1-\xi)^{3}}{3}d\xi
   +\int_{t}^{1}\widehat{g}_{i}(\xi)\frac{(1-\xi)^{2}\xi^{3}}{3}d\xi,\\
   \widehat{g}_{i}(\xi) &\equiv& g(X_{i}+h_{i}\xi).
 \end{eqnarray*}
\end{theorem}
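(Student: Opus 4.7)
The plan is to work interval by interval, represent the interpolation error via the Green's function for the second-derivative operator, and then shuffle the resulting double integral (with Cauchy--Schwarz and a Fubini swap) into the exact form demanded by $\widehat{G}$.

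First I would fix $i$ and set $e_i(S):=f(S)-L_i(S)$ on $[X_i,X_{i+1}]$. Because $L_i$ is linear, $e_i''=f''$, and because $L_i$ interpolates $f$ at the endpoints, $e_i(X_i)=e_i(X_{i+1})=0$. Rescaling by $S=X_i+h_i s$ and $\tilde e(s)=e_i(X_i+h_i s)$, the boundary value problem $\tilde e''(s)=h_i^2 f''(X_i+h_i s)$, $\tilde e(0)=\tilde e(1)=0$, has the explicit solution
\[
\tilde e(s)=-h_i^2\!\int_0^1 K(s,t)\,f''(X_i+h_i t)\,dt,
\]
where $K(s,t)=t(1-s)$ for $t\le s$ and $K(s,t)=s(1-t)$ for $t>s$ is the standard Green's function on $[0,1]$. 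Here I need the hypotheses of $C^2$ on the open subintervals with finite one-sided second derivatives at the nodes so that $f''$ is integrable and the Green's representation is justified on each subinterval.

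Next I would split the integral at $t=s$, write $\tilde e(s)=-h_i^2(A(s)+B(s))$, and bound $\tilde e(s)^2\le 2h_i^4(A(s)^2+B(s)^2)$. This is where the factor $2$ in the theorem enters. Applying Cauchy--Schwarz on each piece,
\[
A(s)^2\le (1-s)^2\!\int_0^s t^2\,dt\int_0^s f''(X_i+h_i t)^2\,dt=\frac{s^3(1-s)^2}{3}\!\int_0^s f''(X_i+h_i t)^2\,dt,
\]
and symmetrically $B(s)^2\le \frac{s^2(1-s)^3}{3}\int_s^1 f''(X_i+h_i t)^2\,dt$.

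Then I would multiply by the rescaled density $\widehat g_i(s)$, integrate $s$ over $[0,1]$, and swap the order of integration via Fubini. The inner $s$-integrals that land on $f''(X_i+h_i t)^2$ are
\[
\int_t^1 \widehat g_i(s)\,\frac{s^3(1-s)^2}{3}\,ds\quad\text{and}\quad \int_0^t \widehat g_i(s)\,\frac{s^2(1-s)^3}{3}\,ds,
\]
whose sum is exactly $\widehat G(t)$. Undoing the scaling $t=(S-X_i)/h_i$ converts $\widehat G(t)$ into $G(S)$, absorbs one factor $h_i$ into $dS=h_i\,dt$, and (together with the $h_i$ arising from $dS$ on the left-hand side) yields the per-interval estimate
\[
\int_{X_i}^{X_{i+1}}[L_i(S)-f(S)]^2 g(S)\,dS\le 2h_i^4\!\int_{X_i}^{X_{i+1}}G(S)(f''(S))^2\,dS.
\]
Summing over $i=0,\ldots,n-1$ and taking square roots gives the stated bound.

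I do not expect a conceptual difficulty: the hard part is purely bookkeeping of constants and limits of integration, namely verifying that the split-and-Cauchy--Schwarz delivers the particular asymmetric kernels $\xi^2(1-\xi)^3/3$ and $(1-\xi)^2\xi^3/3$ on the correct sides of $t$ so that they reassemble into $\widehat G(t)$. If one instead applied Cauchy--Schwarz globally on $[0,1]$ one would get a symmetric kernel and would not recover the stated $\widehat G$; the split at $s$ before Cauchy--Schwarz is essential.
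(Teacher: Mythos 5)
Your proposal is correct and follows essentially the same route as the paper's proof in Appendix~A: the only cosmetic difference is that you obtain the error representation $\tilde e(s)=-h_i^2\int_0^1K(s,t)\,f''(X_i+h_it)\,dt$ from the Green's function of the two-point boundary value problem, whereas the paper derives the identical kernel by Taylor-expanding $\widehat f_i(0)$ and $\widehat f_i(1)$ about $\xi$ with integral remainders. The subsequent split at $t=s$, the inequality $(a+b)^2\le 2(a^2+b^2)$, the Cauchy--Schwarz step with the asymmetric weights, and the Fubini swap reassembling $\widehat G(t)$ all coincide with the paper's computation, and your bookkeeping of the powers of $h_i$ is right.
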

\begin{proof} See Appendix~\ref{appendix-1}.
\end{proof}

\section{Robust algorithms and convergence analysis}\label{robust-algorithm}

In this section we propose two
 algorithms  for static replication of nonlinear payoff functions with European call and put options. The first algorithm is to select the  optimal strike prices  when the strike prices of options from zero to infinity are all available. The second algorithm is to find the optimal weights of the options when there are only a limited number of strike prices are available.

We shall determine the values of strike prices $X_{i},\, i=1,\ldots,n-1$ (The boundary values $X_{0}$ and $X_{n}$ are fixed.) such that the error bound in Theorem~\ref{thm-error} is minimized, which can be achieved with the equidistribution equation\footnote{The equidistribution equation is studied in  \cite{Huang2005}.
The idea is simple and intuitive, for example, equidistributing the length of the function curve to be approximated results in that the nodes are clustered into the region where the function has large gradient and thus makes the approximation more accurate. In fact, the minimum expected area algorithm of \cite{Liu2010} is based on the equdistribution of the approximation error (no proof of this fact in his paper). Our robust algorithm is designed by equdistributing the upper bound on the approximation error which has explicit form as given by Theorem~\ref{thm-error}.}.

Following \cite{Huang2005}, we define the adaptation function $\rho_{i}$ and the
intensity parameter $\alpha_{h}$ by
\begin{eqnarray}\label{rho-i}
\rho_{i} &\equiv& \left(1+\frac{1}{\alpha_{h} h_{i}}\int_{X_{i}}^{X_{i+1}}
G(S)(f''(S))^{2} dS\right)^{\gamma/2},\\
\label{alpha-h}
 \alpha_{h} &\equiv& \left[\frac{1}{X_{n}-X_{0}}\sum_{i=0}^{n-1}h_{i}\left(
 \frac{1}{h_{i}}\int_{X_{i}}^{X_{i+1}}G(S)(f''(S))^2 dS\right)^{\gamma/2}\right]^{2/\gamma},
\end{eqnarray}
for some number\footnote{As mentioned by \cite{Huang2005}, the optimal value, which yields the smallest error bound, is $\gamma=2/5$.}  $\gamma\in (0,2]$.

The equidistribution equation for selecting strike prices $X_{1},\ldots,X_{n-1}$ is defined by
\begin{equation}\label{equidistribution-equation}
 h_{i}\rho_{i}=\frac{\sum_{j=0}^{n-1}h_{j}\rho_{j}}{n},\quad i=0,\ldots,n-1.
\end{equation}
Equation (\ref{equidistribution-equation}) can be written  equivalently as
\begin{equation}\label{equidistribution-equation-1}
 \sum_{\ell=0}^{i-1}h_{\ell}\rho_{\ell}=\frac{i}{n}
 \sum_{j=0}^{n-1}h_{j}\rho_{j},\quad i=1,\ldots,n.
\end{equation}
Define a piecewise constant function
\begin{equation*}\label{piecewise-constant-1}
 \overline{\rho}_{X}(x)=\rho_{i},\quad \hbox{when}\; x\in [X_{i},X_{i+1}],\; i=0,\ldots,n-1.
\end{equation*}
Then equation (\ref{equidistribution-equation-1}) can be rewritten as
\begin{equation}\label{equidistribution-equation-2}
 \int_{X_0}^{X_{i}}\overline{\rho}_{X}(x)\,dx =\frac{i}{n}\int_{X_0}^{X_{n}}\overline{\rho}_{X}(x)\,dx.
\end{equation}
Note that equation (\ref{equidistribution-equation-2})
cannot be solved exactly. We propose the following robust algorithm to solve the equidistribution equation.
\begin{algorithm}\label{algorithm}
 Set initial values
 \[
X_{i}^{(0)}=X_0+i\frac{X_{n}-X_{0}}{n}, \quad i=0,1,\ldots,n.
 \]
 Then the $(k+1)$th-step values for $k=0,1,\ldots$, are calculated by the following iteration
 \begin{equation}\label{iteration}
  \int_{X_{0}^{(k+1)}}^{X_{i}^{(k+1)}}\overline{\rho}_{X^{(k)}}(x)\,dx=\frac{i}{n}
  \int_{X_{0}^{(k)}}^{X_{n}^{(k)}}\overline{\rho}_{X^{(k)}}(x)\,dx,
 \end{equation}
 where $X_{0}^{(k+1)}\equiv X_{0},\, X_{n}^{(k+1)}\equiv X_{n}$ and $\overline{\rho}_{X^{(k)}}(x)$ is the piecewise constant function which is defined by
\[
 \overline{\rho}_{X^{(k)}}(x)=\rho_{i}^{(k)},\quad \hbox{when}\; x\in [X_{i}^{(k)},X_{i+1}^{(k)}],\; i=0,\ldots,n-1,
\]
where $\rho_{i}^{(k)}$ is the expression (\ref{rho-i}) with replacing $X_{i}$ by $X_{i}^{(k)}$.
\end{algorithm}

In fact the iteration equation (\ref{iteration}) explicitly determines
\begin{equation}\label{iteration-1}
 X_{i}^{(k+1)}=X_{j}^{(k)}+\frac{\frac{i}{n} \sum_{\ell=0}^{n-1}h_{\ell}^{(k)}\rho_{\ell}^{(k)}
 -\sum_{\ell=0}^{j-1}h_{\ell}^{(k)}\rho_{\ell}^{(k)}}
 {\rho_{j}^{(k)}},\quad i=1,\ldots,n-1,
\end{equation}
where $h_{\ell}^{(k)}\equiv X_{\ell+1}^{(k)}-X_{\ell}^{(k)}$ and the index $j$ is determined by
\begin{equation*}
\sum_{\ell=0}^{j-1}h_{\ell}^{(k)}\rho_{\ell}^{(k)}<\frac{i}{n} \sum_{\ell=0}^{n-1}h_{\ell}^{(k)}\rho_{\ell}^{(k)}\leq \sum_{\ell=0}^{j}h_{\ell}^{(k)}\rho_{\ell}^{(k)},
\end{equation*}
which means that
\[
 X_{j}^{(k)}< X_{i}^{(k+1)} \leq  X_{j+1}^{(k)}.
\]

In the implementation of Algorithm~\ref{algorithm}, we need to find $\rho_{j}^{(k)},\; j=0,\ldots,n-1$
in expression (\ref{iteration-1}). $\rho_{j}^{(k)}$ with $\gamma=2/5$ can be calculated approximately with some quadrature rules, e.g., the rectangle rule:
\[
 \rho_{j}^{(k)}\approx \left(1+\frac{G(X_{j+1}^{(k)})
 \left(f''(X_{j+1}^{(k)})\right)^2}{\left(\frac{1}{X_{n}-X_{0}}\sum_{\ell=0}^{n-1}
 h_{\ell}^{(k)}\left(G(X_{\ell+1}^{(k)})\right)^{1/5} \left(f''(X_{\ell+1}^{(k)})\right)^{2/5}\right)^{5}} \right)^{1/5},
\]
and
\begin{eqnarray*}
  G(X_{\ell+1}^{(k)}) &=& \widehat{G}(1)= \int_{0}^{1}\widehat{g}_{\ell}(\xi)\frac{\xi^{2}(1-\xi)^{3}}{3}d\xi,\\
  \widehat{g}_{\ell}(\xi) &\equiv& g(X^{(k)}_{\ell}+h^{(k)}_{\ell}\xi).
\end{eqnarray*}

%\subsection{Convergence analysis}\label{section-convergence}

The sequences $X_{i}^{(k)},\, i=0,1,\ldots,n$, generated by the iteration equation (\ref{iteration}) (or equivalent form (\ref{iteration-1})) converge to $X_{i},\, i=0,1,\ldots,n$, generated by the equidistribution equation (\ref{equidistribution-equation}) (or the equivalent forms (\ref{equidistribution-equation-1}), (\ref{equidistribution-equation-2})) as the iteration number $k\rightarrow +\infty$. Since the proof falls into the mathematical framework of \cite{Xu2011}, the details are omitted.
We only need to show the convergence rate of the approximation to the nonlinear payoff with the equidistribution equation (\ref{equidistribution-equation})  for selecting the strike prices.

\begin{theorem}\label{thm-convergence-rate}
 The convergence rate of the static replication of the nonlinear payoff $f$ using the linear spline approximation  (\ref{piecewise-linear})  with the equidistribution equation (\ref{equidistribution-equation})  for selecting the strike prices $X_{i},\; i=0,\ldots,n$, is given by
 \[
  \sqrt{\sum_{i=0}^{n-1}\int_{X_{i}}^{X_{i+1}}[L_{i}(S)-f(S)]^{2}g(S)\,dS}\leq C n^{-2},
 \]
 where $C$ is a positive constant that is independent of the strike prices $X_{i},\; i=1,\ldots,n-1$, and $n$ is the number of the strike prices used in the replication.
\end{theorem}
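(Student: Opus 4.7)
The plan is to combine the error bound from Theorem~\ref{thm-error} with the structural identity forced by the equidistribution equation (\ref{equidistribution-equation}). Writing $I_i := \int_{X_i}^{X_{i+1}} G(S)(f''(S))^2\,dS$ and $M := \frac{1}{n}\sum_{j=0}^{n-1} h_j \rho_j$, equation (\ref{equidistribution-equation}) asserts precisely that $h_i\rho_i = M$ for every $i$, so the entire mesh geometry is controlled by the single scalar $M$. The task then reduces to showing $M = O(1/n)$ and using this to control $\sum_i h_i^4 I_i$, which dominates the error bound in Theorem~\ref{thm-error}.

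First I would bound $M$ from above. Since $\gamma/2 \in (0,1]$, the subadditivity inequality $(1+x)^{\gamma/2} \leq 1 + x^{\gamma/2}$ applied to the definition (\ref{rho-i}) gives
\[
 \sum_{j=0}^{n-1} h_j\rho_j \leq \sum_{j=0}^{n-1} h_j + \alpha_h^{-\gamma/2}\sum_{j=0}^{n-1} h_j\left(\frac{I_j}{h_j}\right)^{\gamma/2} = (X_n-X_0) + \alpha_h^{-\gamma/2}\cdot \alpha_h^{\gamma/2}(X_n-X_0) = 2(X_n-X_0),
\]
where the middle equality is just the definition (\ref{alpha-h}) of $\alpha_h$. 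Hence $M \leq 2(X_n-X_0)/n$.

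Next I would use $h_i\rho_i = M$ and invert the definition (\ref{rho-i}) of $\rho_i$ to solve for $I_i$, obtaining
\[
 I_i = \alpha_h h_i\bigl((M/h_i)^{2/\gamma}-1\bigr) \leq \alpha_h h_i^{1-2/\gamma} M^{2/\gamma},
\]
so that $h_i^4 I_i \leq \alpha_h h_i^{5-2/\gamma} M^{2/\gamma}$. With the optimal choice $\gamma = 2/5$ indicated in the footnote the exponent $5-2/\gamma$ vanishes and this collapses to $h_i^4 I_i \leq \alpha_h M^5$ uniformly in $i$, summing to $\sum_i h_i^4 I_i \leq n\alpha_h M^5 \leq 32\alpha_h (X_n-X_0)^5\, n^{-4}$. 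For any admissible $\gamma \geq 2/5$ the inequality $h_i \leq h_i\rho_i = M$ absorbs the residual positive power of $h_i$ and yields the same $n^{-4}$ bound. Plugging into Theorem~\ref{thm-error} and taking the square root delivers the advertised rate $C n^{-2}$.

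The main obstacle is verifying that $\alpha_h$ stays bounded as $n\to\infty$, because $\alpha_h$ is itself defined by a mesh-dependent Riemann-type sum that changes under the iteration. Here the regularity assumptions of Theorem~\ref{thm-error} are decisive: the explicit formula for $G$ together with integrability and boundedness of the density $g$ on $[X_0,X_n]$ yield a mesh-independent bound $G(S) \leq C_G$, and the hypothesis that $f''$ is bounded on each subinterval with finite one-sided limits at the nodes makes $\sum_j h_j (I_j/h_j)^{\gamma/2}$ comparable to the finite integral $\int_{X_0}^{X_n}(G(S)(f''(S))^2)^{\gamma/2}\,dS$. This gives a uniform upper bound on $\alpha_h$, fixes the constant $C$ in the theorem, and closes the argument.
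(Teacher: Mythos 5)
Your argument is correct and is essentially the paper's own proof: the same subadditivity/Jensen step gives $\sum_j h_j\rho_j\le 2(X_n-X_0)$, and the same use of $h_i\rho_i=M$ with $\gamma=2/5$ turns the Theorem~\ref{thm-error} bound into $2\alpha_h nM^5\le 2^6\alpha_h(X_n-X_0)^5n^{-4}$ (your inversion of (\ref{rho-i}) to isolate $h_i^4I_i\le\alpha_h M^5$ is just an algebraic rearrangement of the paper's $h_i^5\rho_i^{2/\gamma}$ computation). The one genuine addition is that you sketch why $\alpha_h$ stays bounded, a point the paper simply delegates to the citation of Huang (2005).
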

\begin{proof} See Appendix~\ref{appendix-2}.
\end{proof}

%\section{Static quadratic replication algorithms}\label{quadratic-replication}

In the options market there are only limited number of  options with  fixed strike prices are traded. Suppose that the fixed strike prices are $\overline{X}_{j},\; j=1,\ldots,n$, in increasing order. We form a portfolio of call options at maturity to replicate the nonlinear payoff
 $f$,
\begin{equation*}
 f(S)\approx \Pi \equiv \sum_{j=1}^{n}w_{j}(S-\overline{X}_{j})^{+},
\end{equation*}
where $w_{i},\; i=1,\ldots,n$ are chosen to minimize the approximation error
\[
V(w_{1},\ldots,w_{n})\equiv \int_{0}^{\infty}[f(S)-\Pi]^{2}g(S)\,dS.
\]
The first-order optimality conditions lead to a system of equations
\begin{equation*}
 \mathbf{Q}\mathbf{w}=\mathbf{u},
\end{equation*}
with
$\mathbf{Q}=\left(q_{ij}\right)_{i,j=1,\ldots,n}$, $\mathbf{w}=(w_{1},\ldots,w_{n})^{T}$, $ \mathbf{u}=(u_{1},\ldots,u_{n})^{T}$
and
\begin{eqnarray}\label{q-ij}
q_{ij}&=&\int_{\max\{\overline{X}_{i},\overline{X}_{j}\}}^{\infty}
(S-\overline{X}_{i})(S-\overline{X}_{j})g(S)\,dS,\quad i,j=1,\ldots,n,\\
\label{u-i}
u_{i}&=&\int_{\overline{X}_{i}}^{\infty} (S-\overline{X}_{i})f(S)g(S)\,dS,\quad i=1,\ldots,n.
\end{eqnarray}
In general, for complex nonlinear payoff $f$, there is no explicit formula for $u_{i}$
and it relies on numerical solutions. Let $\widetilde{f}_i(S)\equiv (S-\overline{X}_{i})^{+}f(S)$. Then
\begin{equation*}
 u_{i}=\int_{0}^{\infty}\widetilde{f}_i(S)g(S)\,dS=E[\widetilde{f}(S_{T})],
\end{equation*}
from which we can apply Algorithm \ref{algorithm} to the new nonlinear payoff function $\widetilde{f}_i$ with $X_{0}\equiv \overline{X}_{i}$.

In many cases, $q_{ij}$ in (\ref{q-ij}) can be computed explicitly. Taking lognormal distribution for example (see models in Section~\ref{lognormal-case}), $q_{ij}$ can be calculated by the following formula (see \cite{Liu2010}):
\begin{equation*}
 q_{ij}=S_{0}^{2}\Phi(d_{0})e^{2r+\sigma^{2})T}-(\overline{X}_{i}
 +\overline{X}_{j})S_{0}\Phi(d_{1})e^{rT}+\overline{X}_{i}
 \overline{X}_{j}\Phi(d_{2}),
\end{equation*}
where $\Phi$ is the cumulative distribution function of a standard normal variable and
\[
d_{1}=\frac{\ln\left[ S_{0}/\max\{\overline{X}_{i},\overline{X}_{j}\}\right]
+(r+\sigma^{2}/2)T}{\sigma \sqrt{T}},\;
d_{2}=d_{1}-\sigma\sqrt{T},\; d_{0}=d_{1}+\sigma\sqrt{T}.
\]

\section{Numerical implementations and applications}

\subsection{Static replication under lognormal  process}\label{lognormal-case}

Let $S$ be a lognormal variable under a risk-neutral measure. Then $\ln S$ is a normal variable with mean $\ln S_0 +\left(r-\sigma^2/2\right)T$ and variance $\sigma^2 T$, where $r$ is the constant risk-free interest rate, $\sigma$ the constant volatility, and $T$ the maturity. The European call and put prices can be computed explicitly with the Black-Scholes formula. The data used in the numerical tests below are  $S_0=100$, $r=5\%$, $\sigma=20\%$ and  $T=0.25$.

\begin{example}
A variance swap:  Consider the following nonlinear payoff
\begin{equation}\label{example-payoff}
 f(S)=\frac{2}{T}\left(\frac{S-S_0}{S_0}-\ln \frac{S}{S_0}\right),
\end{equation}
which is studied in \cite{Liu2010} and \cite{Demeterfi1999}. This
payoff gives a \$1 exposure for one volatility point squared.

In Table~\ref{table-1} we list the replication values for different maturities and volatilities. The data used are $T=0.25,\, 0.5,\, 1$ and $\sigma=20\%,\, 30\%,\, 60\%$. We implement Algorithm~\ref{algorithm} to select $18$ strikes between $45$ and $140$ for volatility $\sigma=20\%$, $78$ strikes between $25$ and $200$ for volatility $\sigma=30\%$, and $158$ strikes between $15$ and $300$ for volatility $\sigma=60\%$.  The computational results are shown for a notional exposure of \$100 per volatility point squared. We see that the replication values using formula (\ref{replication-formula-Deme}) are very close to the true values of the nonlinear payoffs.

\begin{table}[htbp]
  \centering
  \caption{Replications by Algorithm~\ref{algorithm} (The numbers outside the brackets are the values of replications by Algorithm~\ref{algorithm} and those inside the brackets are the exact values of the nonlinear payoffs. The computational results are
shown for a notional exposure of \$100 per volatility point squared.)}\label{table-1}
\smallskip
\begin{tabular}{|c|c|c|c|}
\hline
\backslashbox{$T$}{$\sigma$} &  $20\%$  & $30\%$ &60\%\\
\hline
0.25 & 4.1122 (4.0123)    & 8.9664 (8.9502) & 35.6283 (35.6148) \\
\hline
0.5  & 4.0729 (4.0242)  & 8.9114 (8.9007) & 35.2220 (35.2341) \\
\hline
1 & 3.9718 (4.0467) & 8.7864 (8.8029) & 34.2173 (34.4861)\\
\hline
\end{tabular}
\end{table}

In Table~\ref{table-2}, we test the convergence rate of Algorithm~\ref{algorithm}. By increasing the total number of strikes between $45$ and $200$, we calculate the total value of the replication. We see that the replication values converge to the true value $(4.0123)$ of the given nonlinear payoff (\ref{example-payoff}) as the total number of strikes $n$ goes to infinity. In addition, we test the convergence rate as follows.
Let $\hbox{TRV}(n)$ denote the total replication values using $n$ points. Assume that the convergence rate of Algorithm~\ref{algorithm} is $p$, i.e.,
\begin{equation}\label{error-equ}
 |\hbox{Error}(n)|\equiv |\hbox{TRV}(n)-(\hbox{True Value})|=O(n^{-p}),
\end{equation}
where $O(n^{-p})$ means that there exists a positive constant $C$ such that
$  O(n^{-p})\approx Cn^{-p}$.
Equation (\ref{error-equ}) gives the formula
for testing the convergence rate
\begin{equation}\label{rate-test-formula}
 p\approx\frac{\log\left(|\hbox{Error}(n)|/|
 \hbox{Error}(2n)|\right)}{\log 2}.
\end{equation}
In Table~\ref{table-2} we calculate the value $p$ using formula (\ref{rate-test-formula})
and obtain that $p\approx 2$. It means that the convergence rate of Algorithm~\ref{algorithm} is $2$, which is consistent with the theoretical result of Theorem \ref{thm-convergence-rate}.

\begin{table}[htbp]
  \centering
  \caption{Convergence rates of Algorithm~\ref{algorithm}
  for replications (The computational results are
shown for a notional exposure of \$100 per volatility point squared.)}\label{table-2}
\smallskip
\begin{tabular}{cccc}
 \hline
 Number of strikes ($n$)&	Total replication values (TRV) & $\hbox{Error}(n)$ &	Convergence rates ($p$)\\
20&	4.1651&	0.1528&--\\
40&	4.0484&	0.0361& 2.1\\
80&	4.0211&	0.0088& 2.0\\
160&	4.0145&	0.0022& 2.0\\
320&	4.0128&	0.0005& 2.1\\
640& 4.0124 & 0.0001&2.3\\
 \hline
\end{tabular}
\end{table}

\smallskip

In Table~\ref{quadratic-replication-test} we test for the static quadratic replication algorithm for nonlinear payoff (\ref{example-payoff}). A set of strikes
$\{50,\, 70,\, 90,\, 100,\, 110,\, 130\}$ are used for the replications\footnote{In the real applications, the strike prices can be picked up from the traded option market.}. The optimal weights and the replication values are computed by the static quadratic replication algorithms in Section~\ref{robust-algorithm}. The numerical results in Table~\ref{quadratic-replication-test} show that the total replication value is $4.0224$ which is close to the true value $4.0123$.

\begin{table}[htbp]
  \centering
  \caption{Numerical results for the static quadratic replication algorithms (The computational results are
shown for a notional exposure of \$100 per volatility point squared.)}\label{quadratic-replication-test}
\smallskip
\begin{tabular}{cccc}
 \hline
 Strikes &	Weight & Value per option  & Cost today\\
50&	1.7393&	  50.6211& 88.0450 \\
70&	$-3.3196$& 30.8698& $-102.4741$\\
90&	1.2107&	  11.6701 & 14.1288\\
100&	0.7073&	4.6150& 3.2642\\
110&	0.8639&	 1.1911&1.0290 \\
130&   1.2978&  0.0228 & 0.0296 \\
\hline
Total & \multicolumn{2}{c}{}& 4.0224 \\
 \hline
\end{tabular}
\end{table}

\end{example}

\begin{example}
 A swaption: Consider the following nonlinear payoff
 \begin{equation}\label{payoff-call-swaption}
  f^{c}(S)=\left( \frac{2}{T}\left(\frac{S-S_0}{S_0}-\ln \frac{S}{S_0}\right)-K\right)^{+}.
 \end{equation}
 We compute the replication value of nonlinear payoff (\ref{payoff-call-swaption}) using Algorithm~\ref{algorithm}. It is  more convenient to replicate the put swaption with
\begin{equation}\label{payoff-put-swaption}
  f^{p}(S)=\left(K- \frac{2}{T}\left(\frac{S-S_0}{S_0}-\ln \frac{S}{S_0}\right)\right)^{+},
 \end{equation}
 and then the call swaption can be computed easily by the put-call parity relation.
% \begin{eqnarray}
% e^{-rT}E\left[f^{c}(S)\right]&=& e^{-rT}E\left[f^{p}(S)\right]+e^{-rT}E
% \left[\frac{2}{T}\left(\frac{S-S_0}{S_0}-\ln \frac{S}{S_0}\right)\right]-e^{-rT}K\nonumber\\
% &=& e^{-rT}E\left[f^{p}(S)\right]+\frac{2}{T} -e^{-rT}\left(2r-\sigma^{2}+\frac{2}{T}+K\right).
% \end{eqnarray}
 Let
 \begin{equation*}
  h(S)=K- \frac{2}{T}\left(\frac{S-S_0}{S_0}-\ln \frac{S}{S_0}\right),\quad S> 0.
 \end{equation*}

% \begin{figure}[htp]
  %\begin{center}
%  \includegraphics[width=16.0cm]{fig.pdf}
 %\includegraphics[width=6.0cm]{fig.pdf}
  %\end{center}
  %\caption{Drawing of function $h$ with $S_0=100$, $T=0.25$, $K=0.01$.}
  %\label{h}
%\end{figure}

A simple check shows that $h$   is strictly concave,  has the maximum value
 at $S=S_{0}$ and has only two solutions  $S_{L}$ and $S_{R}$ ($S_{L}<S_{R}$) to the nonlinear equation
$ h(S)=0$.

The strikes used in the replication can be selected between $S_{L}$ and $S_{R}$. The values of $S_{L}$ and $S_{R}$ are calculated by Newton's method:
\begin{equation}\label{Newton}
 S^{(k)}=S^{(k-1)}-\frac{h\left(S^{(k-1)}\right)}{h'\left(S^{(k-1)}\right)},\quad k=1,\ldots
\end{equation}
%where
%\[h'(S)=\frac{2}{T}\left(\frac{1}{S}-\frac{1}{S_{0}}\right).\]
with the initial point $S^{(0)}$ to be chosen sufficiently small (and large) such that $h(S^{(0)})<0$. Then Newton's iteration (\ref{Newton}) converges to $S_{L}$ (and $S_{R}$) quadratically.
%If we set the initial value $S^{(0)}=S_{L}^{-}$, where $S_{L}^{-}$ is any number between $0$ and $S_{L}$, then the Newton's iteration (\ref{Newton}) converges to $S_{L}$. If we set the initial value $S^{(0)}=S_{R}^{+}$, where $S_{R}^{+}$ is any number larger than $S_{R}$, then the Newton's iteration (\ref{Newton}) converges to $S_{R}$.

In Table~\ref{table-swaption} we list the replication values of the call swaption for different maturities and volatilities with  $S_0=100$, $r=5\%$ and $K=0.01$. Using Newton's iteration (\ref{Newton}), we obtain the values
$S_{L}\approx 95.0840; S_{R}\approx 105.0827$ for $T=0.25$, $S_{L}\approx 93.0956; S_{R}\approx 107.2377$ for $T=0.5$, $S_{L}\approx 90.3315; S_{R}\approx 110.3351$ for $T=1$. $18$ strikes between $S_L$ and $S_{R}$ are selected using Algorithm~\ref{algorithm}.

\begin{table}[htbp]
  \centering
  \caption{Replications by Algorithm~\ref{algorithm} for swaption (\ref{payoff-call-swaption})  (The numbers outside the brackets are the values of replications by Algorithm~\ref{algorithm} and those inside the brackets are the values of Monte-Carlo simulation. The computational results are
shown for a notional exposure of \$100 per volatility point squared.)}\label{table-swaption}
\smallskip
\begin{tabular}{|c|c|c|c|}
\hline
\backslashbox{$T$}{$\sigma$} &  $20\%$  & $30\%$ &60\%\\
\hline
0.25 & 3.2796 (3.2791)    & 8.1353 (8.1469) &  34.7138 (34.6813) \\
\hline
0.5  & 3.2998 (3.3019) &  8.0960 (8.0907)&  34.3438 (34.3599)\\
\hline
1 & 3.3389 (3.3457) &   8.0180 (8.0034)& 33.6168 (33.5928)\\
\hline
\end{tabular}
\end{table}

\end{example}

\smallskip

The replication methods can also be applied to nonlinear path-dependent payoffs. Consider, for example, $f(M_{T})$ with $M_{T}={\displaystyle \max_{0\leq t\leq T} S_{t}}$.
It is known from \cite{Shreve2004} that the pdf of $M_{T}$ has an explicit form. Therefore, Algorithm~\ref{algorithm} can be used and the value of $f(M_{T})$ can be
replicated by a portfolio of barrier options with payoff $\textbf{1}_{M_{T}\geq K}$
and lookback options with payoffs $\left(M_{T}-K\right)^{+}$ or $\left(K-M_{T}\right)^{+}$.

\subsection{Static replication under counterparty risk}

In this section we consider a financial market model with a risky asset subject to a counterparty risk: the dynamics of the risky asset is affected
by the counterparty which may default. However, this stock still exists and can be traded after the default of the counterparty.

Let $W=(W_{t})_{t\in [0,T]}$ be a Brownian motion over a finite horizon $T<\infty$
with the probability space $(\Omega,\mathcal{G},P)$ and denote by $\mathbb{F}= (\mathcal{F}_t)_{t\in [0,T]}$ the natural filtration of $W$. Let $\tau$, a nonnegative and finite random variable on $(\Omega,\mathcal{G},P)$, represent the default time
before the default time $\tau$, the filtration $\mathbb{F}$ represents the information accessible to the investors. When the default occurs, the investors add this new information $\tau$ to the reference filtration $\mathbb{F}$.

Write the risky asset price $S_{t}$ into the following form
\[
 S_{t}=S_{t}^{\mathbb{F}} \textbf{1}_{t<\tau}+S_{t}^{d}(\tau)\textbf{1}_{t\geq \tau},
 \quad 0\leq t\leq T,
\]
where $S_{t}^{\mathbb{F}}$ is $\mathbb{F}$-adapted and $S_{t}^{d}(\theta)$ is $\theta$-measurable and $\mathbb{F}$-adapted.
Then we assume that the asset price follows the following dynamics under physical measure:
\begin{eqnarray}\label{dynam-phy-meas-a}
 &&d S_{t}^{\mathbb{F}}= S_{t}^{\mathbb{F}}\left(\mu^{\mathbb{F}}dt +\sigma^{\mathbb{F}}dW_{t}\right),\quad 0\leq t<\tau,\\
 \label{dynam-phy-meas-b}
 &&d S_{t}^{d}(\tau) = S_{t}^{d}(\tau)\left(\mu_{t}^{d}(\tau)dt +\sigma_{t}^{d}(\tau)dW_{t}\right),\quad
 \tau<t\leq T,\\
 \label{dynam-phy-meas-c}
 &&S_{\tau}^{d}(\tau)= S^{\mathbb{F}}_{\tau-}(1-\gamma^{\mathbb{F}}_{\tau}).
\end{eqnarray}
Here for simplicity we assume that
\[
\mu^{\mathbb{F}}=\mu_1,\; \sigma^{\mathbb{F}}=\sigma_{1},\; \mu_{t}^{d}(\tau)=\mu_{2},\; \sigma_{t}^{d}(\tau)=\sigma_{2},\;\gamma^{\mathbb{F}}_{\tau}=\gamma,
\]
where $\mu_{1},\, \sigma_{1},\, \mu_{2},\, \sigma_{2}$ are nonnegative constants and $\gamma\; (\gamma\leq 1)$  satisfies a fixed distribution. Moreover $\gamma,\, \tau,\, W_{t}$ are independent and $\tau$ is an exponential variable with parameter $\lambda$. For more general set-ups on the model, the reader is  referred to \cite{Jiao-Pham2011}.

Assume that $r$ is riskless interest rate. Changing measure with the Girsanov theorem, the dynamics (\ref{dynam-phy-meas-a})--(\ref{dynam-phy-meas-c}) for asset price $S_{t}$ under physical measure are transformed into the following form under equivalent martingale measure
\begin{eqnarray}\label{dynam-martingale-meas-a}
 &&d S_{t}^{\mathbb{F}}= S_{t}^{\mathbb{F}}\left((r+\lambda m)dt +\sigma_{1}dW_{t}\right),\quad 0\leq t<\tau,\\
 \label{dynam-martingale-meas-b}
 &&d S_{t}^{d}(\tau) = S_{t}^{d}(\tau)\left(rdt +\sigma_{2}dW_{t}\right),\quad
 \tau<t\leq T,\\
 \label{dynam-martingale-meas-c}
 &&S_{\tau}^{d}(\tau)= S^{\mathbb{F}}_{\tau-}(1-\gamma),
\end{eqnarray}
where $m=E(\gamma)$.
From (\ref{dynam-martingale-meas-a})--(\ref{dynam-martingale-meas-c}), we can see that if $\gamma=0$ then there is no jump of asset price at time $\tau$ and this is a simple regime switching model. In practice, we may assume $\gamma$ is a discrete random variable to simplify the computation, e.g., we may assume that $\gamma$ takes value $\gamma_i$ with probability $p_i$ for $i=1,2,3$, where  $0<\gamma_{1}\leq 1$ (loss), $\gamma_{2}=0$ (no change) and  $\gamma_{3}<0$ (gain).
The distribution function of random variable $S=S_{T}$ is given by
\begin{eqnarray}\label{density-regime-diffusion}
 F(S)&=&e^{-\lambda T}\Phi\left(\frac{\ln(S/S_0) -a(T)}{b(T)}\right)\\
 &&\quad+\,\sum_{i=1}^{3}p_{i}\int_{0}^{T}\lambda e^{-\lambda t}\Phi\left(\frac{1}{b(t)}\left(\ln\left(\frac{S}{S_{0}(1-\gamma_{i})}\right) -a(t)\right)\right)\,dt,\nonumber
\end{eqnarray}
where $a(t)=(r+\lambda m-\sigma_{1}^{2}/2)t+(r-\sigma_{2}^{2}/2)(T-t)$, $b(t)=\sqrt{\sigma_{1}^{2} t+\sigma_{2}^{2}(T-t)}$ and $\Phi$ is the cumulative distribution function of a standard normal variable. The proof of formula (\ref{density-regime-diffusion}) is given in Appendix~\ref{appendix-3}.
%\begin{equation}\label{gamma-discrete-rv}
 %\gamma=\left\{
 %\begin{array}{ll}
 %\gamma_{1},& \hbox{with probability}\, p_1;\\
 %\gamma_{2},& \hbox{with probability}\, p_2;\\
 %\gamma_{3},& \hbox{with probability}\, p_3,
 %\end{array}
 %\right.
%\end{equation}
%where $p_1+p_2+p_3=1$, $0<\gamma_{1}\leq 1$ indicating that the asset price jumps downwards and the investor loses $\gamma_{1}$ percentage of asset value, $\gamma_{3}<0$ indicating that the asset price jumps upwards and the investor gains $\gamma_{3}$ percentage of asset value, $\gamma_{2}=0$ indicating that the asset price has no jumps.
%With (\ref{dynam-martingale-meas-a})--(\ref{gamma-discrete-rv}), the density function of random variable $S=S_{T}$ is given by
%\begin{eqnarray}\label{density-regime-diffusion}
 %g(S)&=&\frac{e^{-\lambda T}}{\sigma_{1}\sqrt{T}}\frac{1}{S}\phi\left(\frac{\ln(S/S_0) -(r+\lambda m-\sigma_{1}^{2}/2)T}{\sigma_{1}\sqrt{T}}\right)\\
% &&\quad+\,\sum_{i=1}^{3}p_{i}\int_{0}^{T}\frac{\lambda e^{-\lambda t}}{b(t) }\frac{1}{S}\phi\left(\frac{1}{b(t)}\left(\ln\left(\frac{S}{S_{0}(1-\gamma_{i})}\right) -a(t)\right)\right)\,dt,\nonumber
%\end{eqnarray}
%where $a(t)=(r+\lambda m-\sigma_{1}^{2}/2)t+(r-\sigma_{2}^{2}/2)(T-t)$, $b(t)=\sqrt{\sigma_{1}^{2} t+\sigma_{2}^{2}(T-t)}$
%and $\phi$
%is the density function of a standard normal variable. The proof of formula (\ref{density-regime-diffusion}) is given in Appendix~\ref{appendix-3}.

Combining the distribution function $F$ in (\ref{density-regime-diffusion}) and  the formula
$$
\int_{0}^{\infty}(S-K)^{+}d\Phi\left(\frac{1}{B}\left(\ln\left(\frac{S}{C}\right)-A\right)\right)
=Ce^{A+\frac{B^2}{2}}\Phi(x_{0}+B)-K\Phi(x_{0}),
$$
where $A$ is a constant, $B,C, K$ are positive constants and $x_{0}=\frac{1}{B}\left(A-\ln\left(\frac{K}{C}\right)\right)$, we can easily compute the value of a call option at time 0 with counterparty risk as
\begin{eqnarray}\label{call-counterparty}
&&e^{-rT}E\left[(S-K)^{+}\right]\\
 &&\quad = S_{0}e^{-(1-m)\lambda T}
\Phi\left(\widetilde{d}_{0}+b(T)\right)
-Ke^{-(r+\lambda)T}
\Phi\left(\widetilde{d}_{0}\right)\nonumber\\
&&\quad+\,e^{-rT}\sum_{i=1}^{3}p_{i}\int_{0}^{T}\lambda e^{-\lambda t}\left[
S_{0}(1-\gamma_i)e^{a(t)+b^{2}(t)/2}\Phi\left(\widetilde{d}_{i}(t)+b(t)\right)
-K\Phi\left(\widetilde{d}_{i}(t)\right)
\right]\,dt,\nonumber
\end{eqnarray}
where $\widetilde{d}_{0}={1\over b(T)} \left(a(T)-\ln\left({K\over S_{0}}\right)\right)$ and
$\widetilde{d}_{i}(t)= \frac{1}{b(t)}\left(a(t)-\ln\left(\frac{K}{S_0(1-\gamma_{i})}\right)\right)$ for $i=1,2,3$.
%\begin{eqnarray*}
%\widetilde{d}_{0}&=&\frac{a(T)-\ln\left(K/S_{0}\right)}
%{b(T)},\\
%\widetilde{d}_{i}&=& \frac{1}{b(t)}\left(a(t)-\ln\left(\frac{K}{S_0(1-\gamma_{i})}\right)\right),\quad i=1,2,3.
%\end{eqnarray*}
The value of a put option  can be computed with the put-call parity relation
\begin{eqnarray}\label{put-counterparty}
&&e^{-rT}E\left[(K-S)^{+}\right]-e^{-rT}E\left[(S-K)^{+}\right]\\
 &&\quad = Ke^{-rT}-S_{0}e^{-(1-m)\lambda T} -
\lambda S_{0}e^{-rT}\left(\int_{0}^{T}
e^{a(t)+b^{2}(t)/2-\lambda t}
\,dt\right)\sum_{i=1}^{3}p_{i}(1-\gamma_i).\nonumber
\end{eqnarray}

Now we can use Algorithm~\ref{algorithm} to replicate a variance swap with the payoff $f$ in (\ref{example-payoff}) and
with the asset price $S$ having the distribution function $F$ in (\ref{density-regime-diffusion}). In Table \ref{table-for-counterparty-risk} we list the replication values and the true values    for different jump sizes and probabilities. The data used are  $S_0=100$, $T=1$, $r=5\%$, $\sigma_1=40\%$, $\sigma_{2}=20\%$, $\lambda=0.5$, others are in Table~\ref{table-for-counterparty-risk}. Since $f$  is simple we have  a closed-form formula for the valuation that can be used to  compare the accuracy of the algorithm. We  choose $X_0=5$ and $X_n=400$ and $n=80$. It is clear that the static replication values are very close to the true values.
%Using a similar technique as Appendix~\ref{appendix-4}, we can obtain the following formula:
%\begin{eqnarray*}
%e^{-rT}E\left[f(S)\right]
%&=& \frac{2}{T}e^{-(r+\lambda )T}\left[e^{(r+\lambda m)T}-1-(r+\lambda m -\sigma_{1}^2/2)T\right]\\
%&+&\frac{2}{T}e^{-rT}\sum_{i=1}^{3}p_{i}\int_{0}^{T}\lambda e^{-\lambda t}\left[
%(1-\gamma_{i})e^{a(t)b(t)+b^{2}(t)/2}-a(t)b(t)\right] dt\\
%&-&\frac{2}{T}e^{-rT}(1-e^{-\lambda T})\left[1+\sum_{i=1}^{3}p_{i}\ln(1-\gamma_{i})\right].
%\end{eqnarray*}

\begin{table}[htbp]
  \centering
  \caption{Numerical results for the static replication of nonlinear payoff under counterparty risk (The computational results are
shown for a notional exposure of \$100 per volatility point squared.)}\label{table-for-counterparty-risk}
\smallskip
\begin{tabular}{cccccccc}
 \hline
 $\gamma_1$ & $\gamma_2$ & $\gamma_3$ & $p_1$ & $p_2$ & $p_3$ & Replication value& True value\\
 \hline
 0.5& 0 & $-0.2$ & 0.3& 0.5& 0.2 & 17.6584 & 17.6316 \\
 \hline
 0.9& 0 & $-0.2$ & 1 & 0& 0& 118.0538 & 118.0215\\
 \hline
 0.9&	0&	$-0.2$ & 0.9&	0&	0.1& 107.6932 & 107.6547 \\
 \hline
\end{tabular}
\end{table}

\section{Conclusions}

In this paper we  propose a robust algorithm for optimally approximating the nonlinear payoff and derive the rigorous convergence theory. We define an equidistribution equation for selecting the strike prices and construct a simple, fast and accurate iterative algorithm for  implementation.  In addition we perform some numerical tests, including examples of  the static quadratic replication with the options traded in the market and the asset price model with  counterparty risk. The results of the paper  have  generalized and improved those of the static replication and approximation in  the literature.

\appendix

\section{Proof of Theorem~\ref{thm-error}}\label{appendix-1}

Transform $S\in [X_{i},X_{i+1}]$ into $\xi\in[0,1]$ by mapping
$ S=X_{i}+h_{i}\xi$ and denote by
 $\widehat{f}_{i}(\xi) \equiv f(X_{i}+h_{i}\xi)$. Then,
from (\ref{piecewise-linear}), we have
\begin{equation}\label{transformed-piecewise-linear}
 \widehat{L}_{i}(\xi)=L_{i}(X_{i}+h_{i}\xi)=\widehat{f}_{i}(0)(1-\xi) +\widehat{f}_{i}(1)\xi,\quad \xi\in[0,1].
\end{equation}
Taylor's theorem gives that
\begin{eqnarray}\label{taylor-1}
\widehat{f}_{i}(0)&=&\widehat{f}_{i}(\xi) -\xi\widehat{f}'_{i}(\xi)-\int_{\xi}^{0}t\widehat{f}''_{i}(t)dt,\\
\label{taylor-2}
\widehat{f}_{i}(1)&=&\widehat{f}_{i}(\xi)+(1-\xi)\widehat{f}'_{i}(\xi)
+\int_{\xi}^{1}(1-t)\widehat{f}''_{i}(t)dt.
\end{eqnarray}
Using (\ref{transformed-piecewise-linear}), (\ref{taylor-1}) and (\ref{taylor-2}), we have
\[
\widehat{f}_{i}(\xi)-\widehat{L}_{i}(\xi)
=-\xi\int_{\xi}^{1}(1-t)\widehat{f}''_{i}(t)dt -(1-\xi)\int_{0}^{\xi}t\widehat{f}''_{i}(t)dt.
\]
Therefore, using $(a+b)^2\leq 2(a^2+b^2)$ and the Cauchy-Schwartz inequality, we derive that
\begin{eqnarray}\label{error-estimate-1}
&&\int_{0}^{1}[\widehat{f}_{i}(\xi)-\widehat{L}_{i}(\xi)]^2\widehat{g}(\xi)d\xi\\
&&\leq\, 2\int_{0}^{1}\left(\xi^2\left(\int_{\xi}^{1}(1-t)\widehat{f}''_{i}(t)dt\right)^2
+ (1-\xi)^2\left(\int_{0}^{\xi}t\widehat{f}''_{i}(t)dt\right)^2\right)
\widehat{g}_{i}(\xi)d\xi\nonumber\\
&&\leq\, 2\int_{0}^{1}\left(\xi^{2}{(1-\xi)^3\over 3}
\int_{\xi}^{1}(\widehat{f}''_{i}(t))^2dt+
(1-\xi)^{2}{\xi^3\over 3}
\int_{0}^{\xi}(\widehat{f}''_{i}(t))^2dt\right)\,\widehat{g}_{i}(\xi)d\xi\nonumber\\&&=\, 2\int_{0}^{1}\widehat{G}(t)(\widehat{f}''_{i}(t))^2dt,\nonumber
\end{eqnarray}
where
\begin{equation}\label{G-hat}
\widehat{G}(t)\equiv \int_{0}^{t}\widehat{g}_{i}(\xi)\frac{\xi^{2}(1-\xi)^{3}}{3}d\xi
+\int_{t}^{1}\widehat{g}_{i}(\xi)\frac{(1-\xi)^{2}\xi^{3}}{3}d\xi.
\end{equation}
We can now estimate the weighted squared error
\begin{eqnarray}\label{error-estimate-3}
 \int_{X_{i}}^{X_{i+1}}[L_{i}(S)-f(S)]^{2}g(S)\,dS
&=& h_{i}\int_{0}^{1}[\widehat{f}_{i}(\xi)-\widehat{L}_{i}(\xi)]^2\widehat{g}(\xi)d\xi\nonumber\\
 &\leq& 2h_{i}\int_{0}^{1}\widehat{G}(\xi)\left[\frac{d^2f(X_{i}+h_{i}\xi)}
 {d\xi^2}\right]^2d\xi\nonumber\\
 &=& 2h_{i}^{4}\int_{X_{i}}^{X_{i+1}}G(S)
 \left(f''(S)\right)^2 dS,
\end{eqnarray}
where
\begin{equation*}
G(S)\equiv \widehat{G}\left(\frac{S-X_{i}}{h_{i}}\right),\quad \hbox{for}\; S\in [X_{i},X_{i+1}].
\end{equation*}
\hfill $\Box$

\section{Proof of Theorem~\ref{thm-convergence-rate}}\label{appendix-2}

Following the idea of \cite{Huang2005} using a different measure, we prove this theorem. First we prove that $\sum_{j=0}^{n-1}h_{j}\rho_{j}$ is bounded. It follows from Jensen's inequality and the definition (\ref{alpha-h}) for $\alpha_{h}$ that
\begin{eqnarray}\label{thm2-ineq-1}
 \sum_{j=0}^{n-1}h_{j}\rho_{j}&=& \sum_{j=0}^{n-1}h_{j}\left(1+\alpha_{h}^{-1}\left(
 \frac{1}{h_{j}}\int_{X_{j}}^{X_{j+1}} G(S)(f''(S))^{2}dS\right)\right)^{\gamma/2}\nonumber\\
 &\leq& \sum_{j=0}^{n-1}h_{j}\left(1+\alpha_{h}^{-\gamma/2}\left(
 \frac{1}{h_{j}}\int_{X_{j}}^{X_{j+1}} G(S)(f''(S))^{2}dS\right)^{\gamma/2}\right)\nonumber\\
 &=& 2(X_{n}-X_{0}).
\end{eqnarray}
From the error bound in Theorem~\ref{thm-error}, the definition (\ref{alpha-h}) for $\alpha_{h}$, and the definition (\ref{rho-i}) for $\rho_{i}$, we derive that
\begin{eqnarray}\label{thm2-ineq-2}
&&\sqrt{\sum_{i=0}^{n-1}\int_{X_{i}}^{X_{i+1}}[L_{i}(S)-f(S)]^{2}g(S)\,dS}\nonumber\\
&&\quad\leq\, \sqrt{2
  \sum_{i=0}^{n-1}h_{i}^{5}\left(\alpha_{h}+\frac{1}{h_{i}}
  \int_{X_{i}}^{X_{i+1}}G(S)(f''(S))^2\,dS\right)}\nonumber\\
&&\quad =\, \sqrt{2\alpha_{h}
  \sum_{i=0}^{n-1}h_{i}^{5}\rho_{i}^{2/\gamma}}.
\end{eqnarray}
Now take $\gamma=2/5$. Then combining (\ref{thm2-ineq-1}) with (\ref{thm2-ineq-2}) and using the definition of (\ref{equidistribution-equation}),
we derive that
\begin{eqnarray*}
\sqrt{\sum_{i=0}^{n-1}\int_{X_{i}}^{X_{i+1}}[L_{i}(S)-f(S)]^{2}g(S)\,dS}
&\leq& \sqrt{2\alpha_{h}\sum_{i=0}^{n-1}h_{i}\rho_{i}\left(h_{i}\rho_{i}\right)^{4}}\\
&=& \sqrt{2\alpha_{h}\sum_{i=0}^{n-1}h_{i}\rho_{i}\left(\frac{\sum_{j=0}^{n-1} h_{j}\rho_{j}}{n}\right)^{4}}\\
&\leq& \sqrt{2^{6}\alpha_{h}\left(X_{n}-X_{0}\right)^{5}}n^{-2}\\
&\leq& Cn^{-2}.
\end{eqnarray*}
Here we have used the fact that $\alpha_{h}$ is bounded (see  \cite{Huang2005}) in the last inequality.  \hfill $\Box$

\section{Proof of formula (\ref{density-regime-diffusion})}\label{appendix-3}

The solutions to (\ref{dynam-martingale-meas-a})--(\ref{dynam-martingale-meas-c}) are given by
\begin{eqnarray}
 S_{t}^{\mathbb{F}}&=&S_{0}e^{(r+\lambda m -\sigma_{1}^{2}/2)t+\sigma_{1}W_{t}},\quad
 0\leq t <\tau, \label{C1}\\
 S_{t}^{d}(\tau)&=&S_{\tau}^{d}(\tau)e^{(r-\sigma_{2}^{2}/2)(t-\tau) +\sigma_{2}(W_{t}-W_{\tau})},\quad \tau<t\leq T, \label{C2}\\
 S_{\tau}^{d}(\tau)&=& S_{\tau-}^{\mathbb{F}}(1-\gamma). \label{C3}
\end{eqnarray}
We now compute the distribution of $S_{T}$.
\begin{eqnarray}\label{distribution-fun}
 F(S)&=&P(S_{T}\leq S)=E[\textbf{1}_{S_{T}\leq S}]\nonumber\\
 &=&E[\textbf{1}_{S_{T}\leq S}\textbf{1}_{\tau\geq T}]+E[\textbf{1}_{S_{T}\leq S}
 \textbf{1}_{\tau<T}]\nonumber\\
 &\equiv& A+B.
\end{eqnarray}
$A$ and $B$ can be computed as follows:
\begin{eqnarray}\label{A}
 A&=&E\left[E[\textbf{1}_{S_{T}\leq S}\textbf{1}_{\tau\geq T}\big|\tau]\right]\nonumber\\
 &=& \int_{T}^{\infty}\lambda e^{-\lambda t}
 E[\textbf{1}_{S_{T}^{\mathbb{F}}\leq S}\big|\tau=t]\,dt\nonumber\\
 &=&\int_{T}^{\infty}\lambda e^{-\lambda t}P(S_{T}^{\mathbb{F}}\leq S)\,dt\nonumber\\
 &=&e^{-\lambda T}\Phi\left(
 \frac{\ln\left(S/S_{0}\right)-(r+\lambda m-\sigma_{1}^{2}/2)T}
 {\sigma_{1}\sqrt{T}}\right)
 \end{eqnarray}
and
\begin{eqnarray}\label{B-1}
B&=& E\left[E[\textbf{1}_{S_{T}\leq S}\textbf{1}_{\tau<T}\big|\tau]\right]\nonumber\\
&=& \int_{0}^{T}\lambda e^{-\lambda t}E[\textbf{1}_{S_{T}^{d}(\tau)\leq S}\big|\tau=t]\, dt\nonumber\\
&=& \int_{0}^{T}\lambda e^{-\lambda t}
 E\left[P(S_{T}^{d}(t)\leq S\big| \gamma)\right]\, dt\nonumber\\
&=& \int_{0}^{T}\lambda e^{-\lambda t}
 E\left[\Phi
 \left(\frac{1}{b(t)}\left(\ln\left(\frac{S}{S_{0}(1-\gamma)}\right)-a(t)\right)\right) \right]\, dt\nonumber\\
&=& \sum_{i=1}^{3}p_{i}\int_{0}^{T}\lambda e^{-\lambda t}\Phi
 \left(\frac{1}{b(t)}\left(\ln\left(\frac{S}{S_{0}(1-\gamma_{i})}\right)
 -a(t)\right)\right)\,dt,
\end{eqnarray}
where
$a(t)\equiv (r+\lambda m-\sigma_{1}^{2}/2)t+(r-\sigma_{2}^{2}/2)(T-t)$
and $
b(t)= \sqrt{\sigma_{1}^{2}t+\sigma_{2}^{2}(T-t)}$. We have used (\ref{C1})-(\ref{C3}) in computing $P(S_{T}^{d}(t)\leq S\big| \gamma)$.
\hfill $\Box$
\end{document}